%% file: root.tex
\documentclass{scrarticle}

\usepackage{silence}
\usepackage{xcolor}
\usepackage{algorithm}
\usepackage{algorithmic}
\usepackage[numbers]{natbib}
\usepackage[utf8]{inputenc}
\usepackage[english]{babel}
\usepackage{microtype}
\usepackage{graphicx}
\usepackage{subfigure}
\usepackage{booktabs}
\usepackage{hyperref}
\usepackage{amsmath}
\usepackage{amssymb}
\usepackage{mathtools}
\usepackage{amsthm}
\usepackage{xspace}
\usepackage[capitalize,noabbrev]{cleveref}

\hypersetup{
    colorlinks,
    linkcolor={red!80!black},
    citecolor={blue!80!black},
    urlcolor={blue!80!black}
}

\theoremstyle{plain}
\newtheorem{theorem}{Theorem}[section]
\newtheorem{proposition}[theorem]{Proposition}
\newtheorem{lemma}[theorem]{Lemma}

\theoremstyle{definition}
\newtheorem{definition}[theorem]{Definition}

\theoremstyle{remark}

\newtheorem{example}{Example}

\renewcommand{\phi}{\varphi}
\renewcommand{\epsilon}{\varepsilon}

\newcommand{\stonr}{STON’R\xspace}

\DeclareMathOperator{\Smaller}{Smaller}
\DeclarePairedDelimiterX\set[1]\lbrace\rbrace{#1}

\providecommand{\norm}[1]{\lVert#1\rVert}

\def\.#1{\mathbf{#1}}
\newcommand{\G}{\mathcal{G}}

\newcommand{\R}{\mathbb{R}}

\title{\stonr Converges to First-Order Nash~Equilibria of Multiplayer Games}

\author{Marika Kosohorská \and Tomáš Kroupa \and Tomáš Votroubek}
\publishers{Department of Computer Science,\\ Czech Technical University in Prague\thanks{\raggedright The authors acknowledge the support of the grant 
no. 24-12046S of the Czech Science Agency.\par}}

\begin{document}

\maketitle
\begin{abstract}
Nonconcave games present a unique challenge, as neither pure Nash equilibria nor local Nash equilibria (LNE) are guaranteed to exist, even in zero-sum settings. Additionally, computing approximate LNE in smooth multiplayer games over bounded regions is PPAD-hard. These challenges, coupled with the inherent complexity, have driven recent research toward broader equilibrium concepts, such as min-max critical points, and first-order Nash equilibria (FONE), which correspond to solutions of specific non-monotone variational inequalities. This paper addresses general-sum multiplayer games with compact convex strategy sets and smooth, nonconcave utility functions. Daskalakis et al. introduced the STON'R algorithm for solving variational inequality problems and established convergence under smoothness assumptions. They further showed that the algorithm’s limit points correspond to equilibria in specific classes of games, namely local minimax equilibria in two-player zero-sum games and Nash equilibria in smooth concave games. In this work, we extend the convergence result to multiplayer general-sum games and show that the variational inequality solutions targeted by STON'R correspond to first-order Nash equilibria (FONE), a general game-theoretic solution concept that unifies these previously studied cases. We demonstrate the effectiveness and robustness of the algorithm on various examples from recent literature.
\end{abstract}

\section{Introduction}
Computing the Nash equilibrium for continuous games is a significant challenge in game theory. It is known to be difficult even for two-player zero-sum games with compact strategy spaces when the loss function of the minimizing player is not convex-concave; i.e., it is not convex in the player's own variables or not concave in the opponent's variables. Although the existence of a mixed Nash equilibrium is guaranteed under the continuity of the loss function, a pure Nash equilibrium may not exist, and determining whether it does is NP-hard \cite{daskalakis2021complexity}.

Therefore, first-order solution concepts have been introduced, such as min-max critical points, and first-order Nash equilibria \cite{daskalakis2023stay,Tsaknakis2021-jv,Nouiehed2019-ez,razaviyayn2020nonconvex}. These points represent solutions to a specific non-monotone variational inequality and are guaranteed to exist for smooth functions. They correspond to the fixed points of the projected gradient descent-ascent dynamics, and hence their computational complexity lies in the PPAD class \cite{daskalakis2009complexity}. In a recent analysis of the special case of monotone variational inequalities associated with convex-concave minimax problems, \citet{sedlmayer2023fast} introduced a first-order method that uses a single operator evaluation and a single projection in each iteration.

The \stonr algorithm \cite{daskalakis2023stay} for finding min-max critical points is based on a topological argument, leveraging the equivalence between the fixed points of the projected gradient descent-ascent dynamics and min-max critical points. Another method exploits the regularized Nikaido-Isoda function \cite{Tsaknakis2021-jv} to find an approximate first-order Nash equilibrium within a neighborhood in nonconvex two-player games, where the players have access only to local stochastic gradients. A first-order numerical method for special nonconvex-concave games based on multi-step gradient descent-ascent was analyzed and its performance tested on adversarial learning problems \cite{Nouiehed2019-ez}. Continuous games without assumptions on convexity or concavity have recently found numerous applications in AI/ML, including signal and data processing, GAN training, and robust machine learning \cite{razaviyayn2020nonconvex}.

This paper explores nonzero-sum continuous games involving more than two players, where the strategy spaces are compact and convex, and the utility functions are not necessarily concave. We extend the \stonr algorithm, originally designed for nonconvex-nonconcave continuous two-player zero-sum games, to handle nonconcave nonzero-sum continuous games. This extension is feasible because the original \stonr algorithm essentially addresses non-monotone variational inequalities linked to Nash equilibrium problems. We validate the extended algorithm's soundness and effectiveness, demonstrating its performance through several examples from recent research literature.

In two-player zero-sum games, the solution concept for \stonr is the min-max critical point, which coincides with first-order Nash equilibrium in this setting. We emphasize that local Nash equilibria may fail to exist already in two-player zero-sum games (for example, $X_1=X_2=[0,1], u_1(x_1,x_2)=(x_1-x_2)^2, u_2=-u_1$). Notably, our adaptation of \stonr to general-sum games enables us to recover first-order Nash equilibria, representing a key contribution of our work.

\section{Nash Equilibria in Continuous Games}
Let $N= \{1,\dots,n\}$ be the set of players for some positive integer $n$. We assume that each player $i\in N$ has a compact convex set of strategies $X_i\subseteq \R^{d_i}$, where $d_i$ is a positive integer. We define 
\[
    \.X = \bigtimes_{i\in N}X_i
\]
and require that the utility function $u_i\colon \.X\to \R$ of each player $i\in N$ be continuous. The triple 
\[
    \G=(N,(X_i)_{i\in N},(u_i)_{i\in N})
\]
is then an $n$-player strategic game, which we call a \emph{continuous game}. The \emph{dimension} of game $\G$ is the integer 
\begin{equation}\label{def:dim}
    d = \sum_{i=1} ^n d_i.
\end{equation}
A vector $x_i=(x_{i1},\dots,x_{id_i})\in X_i$ denotes a strategy of player $i\in N$, and we define $$\.X_{-i}=\bigtimes_{\substack{j\in N\\j\neq i}} X_j.$$  A~\emph{Nash equilibrium (NE)} in continuous game $\G$ is a strategy profile $\.x^*=(x_1^*,\dots,x_n^*)\in\.X$ such that
\begin{equation}\label{def:NE}
    u_i(x_i,\.x_{-i}^*) \le u_i(\.x^*) \quad \text{for each $i\in N$ and all $x_i \in X_i$,}
\end{equation}
 where $\.x_{-i}^*=(x_1^*,\dots,x_{i-1}^*,x_{i+1}^*,\dots,x_n^*)\in\.X_{-i}$. 
 It is well known that if each function $u_i(.,\.x_{-i})$ is concave on~$X_i$ for every $\.x_{-i}\in \.X_{-i}$ in a continuous game $\G$, then $\G$ has a NE \cite{rosen1965existence}. Analogously to the Nash theorem for finite strategic games, every continuous game $\G$ has a NE in mixed strategies \cite{Glicksberg52}. However, the theory of mixed equilibria for continuous games is rather complicated, since mixed strategies may be arbitrary Borel probability measures with infinite supports \cite{Karlin59}. Several papers study computational methods for recovering mixed strategy NE in continuous games; cf. \cite{SteinOzdaglarParrilo08,LarakiLasserre12,Adam_Horcik_Kasl_Kroupa_2021}.

 There are many generalizations of the concept of NE. One such generalization is based on relaxing the global optimality requirement. A~\emph{local NE (LNE)} in continuous game $\G$ is a strategy profile $\.x^*=(x_1^*,\dots,x_n^*)\in\.X$ such that there exists $\delta>0$ satisfying for each $i\in N$ and every $x_i \in X_i$ the inequality
 \begin{equation}\label{def:LNE}
     u_i(x_i,\.x_{-i}^*) \le u_i(\.x^*) \quad \text{whenever $\norm{x_i-x_i^*}_2 < \delta$.}
 \end{equation}
 LNE is an important solution concept in min-max optimization, i.e., in case that $n=2$ and the game is zero-sum; see \cite{jin2020local} and references therein.

\section{First-Order Nash Equilibria}
From now on, we will assume that each utility function~$u_i$ in continuous game $\G$ is continuously differentiable and we call such game $\G$ \emph{continuously differentiable}.
Consider $[d_i]=\{1,\dots,d_i\}$ for each $i\in N$. For each $\.x=(x_1,\dots,x_n) \in \.X$, each $i\in N$, and each $j \in [d_i]$, we define
\begin{align}
    v_{ij}^{\G}(\.x) & = \frac{\partial u_i(\.x)}{\partial x_{ij}}, \label{def:pd}\\
    v_i^{\G}(\.x)& = \left( v_{i1}^{\G}(\.x),\dots,v_{id_{i}}^{\G}(\.x)\right),\; \text{and}\\ \label{def:v}
    v^{\G}(\.x) & = \left(v_1^{\G} (\.x),\dots,v_n^{\G}(\.x)\right).
\end{align}
By applying the first-order optimality conditions of nonlinear programming to each player's maximization problem based on \eqref{def:NE}, any LNE \(\mathbf{x}^*\) in a continuously differentiable game \(\mathcal{G}\) necessarily satisfies \(n\) variational inequalities \cite{facchinei2003finite} of the form 
\begin{equation}\label{VNE}
    \langle v_i^{\G}(\.x^*), x_i-x_i^* \rangle \le 0 \enskip \text{for each $i\in N$ and every $x_i\in X_i$.}
\end{equation}
If each $u_i(.,\.x_{-i})$ is concave on $X_i$ for every $\.x_{-i}\in \.X_{-i}$, then condition \eqref{VNE} implies that $\.x^*$ is a NE \cite{facchinei2003finite}. In general, condition \eqref{VNE} is necessary but not sufficient for~$\.x^*$ to be a LNE.  The concept of first-order Nash equilibrium is a useful weakening of the concept of LNE in the situations where LNE fails to exist or computing it is prohibitively expensive; cf. \cite{daskalakis2023stay, daskalakis2021non,Tsaknakis2021-jv,Nouiehed2019-ez,mertikopoulos2017convergence}.
\begin{definition}
    Let $\G$ be a continuously differentiable game. A~strategy profile $\.x^*$ satisfying \eqref{VNE} is called a \emph{first-order Nash equilibrium (FONE)}.
\end{definition}
FONE also appears in literature under different names, such as \emph{variational NE} \cite{rockafellar2018variational,rockafellar2024generalized} or \emph{game-stationary equilibrium} \cite{razaviyayn2020nonconvex}. We note that FONE is a special case of \emph{generalized NE} with nonconcave utilities \cite{Facchinei2010-su, Scutari2010-ly}. In case that $n=2$ and game $\G$ is zero-sum, $u_1+u_2=0$, any FONE is a \emph{min-max critical point} \cite{daskalakis2023stay}.

In the following, we omit the upper index ${\G}$ and write simply $v(\.x)$ in place of $v^{\G}(\.x)$ whenever game $\G$ is understood. By $\Pi_K(\.y)$ we denote the \emph{Euclidean projection} of a vector $\.y\in\R^k$ onto a closed convex set $K\subseteq \R^k$. Below we summarize the conditions equivalent to FONE appearing in literature, which can be deduced from the basic theory of variational inequalities \cite{facchinei2003finite}.

\begin{proposition}\label{VNEchar}
Let $\.x^* \in \.X$ be a strategy profile in a continuously differentiable game $\G$. The following are equivalent.
    \begin{enumerate}
        \item $\.x^*$ is a FONE.
        \item $\langle v(\.x^*),\.x-\.x^* \rangle \leq 0$ for all $\.x \in \.X$. \hfill \text{\normalfont \rmfamily VI$(v,\.X)$}
        \item $\.x^*=\Pi_{\.X}\left(\.x^*+v(\.x^*)\right)$.
    \end{enumerate}
\end{proposition}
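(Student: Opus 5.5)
I will prove the cycle (1)$\Rightarrow$(2)$\Rightarrow$(1) and then (2)$\Leftrightarrow$(3), exploiting the product structure $\.X=\bigtimes_{i\in N}X_i$ together with the block decomposition $v(\.x)=(v_1(\.x),\dots,v_n(\.x))$. The inner product on $\R^d$ splits accordingly as $\langle v(\.x^*),\.x-\.x^*\rangle=\sum_{i\in N}\langle v_i(\.x^*),x_i-x_i^*\rangle$.

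For (1)$\Rightarrow$(2), I take any $\.x\in\.X$, so that $x_i\in X_i$ for every $i$. Each summand is then nonpositive by \eqref{VNE}, and therefore so is the sum. For (2)$\Rightarrow$(1), I fix $i\in N$ and $x_i\in X_i$, and test (2) with the profile $\.x=(x_i,\.x^*_{-i})$. This profile lies in $\.X$ because $\.X$ is a product set. All blocks $j\neq i$ of $\.x-\.x^*$ vanish, so the sum collapses to $\langle v_i(\.x^*),x_i-x_i^*\rangle\le 0$, which is exactly \eqref{VNE}. This decoupling step is the only place where the product structure matters. It is the point I would state carefully; it is not a real obstacle.

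For (2)$\Leftrightarrow$(3), I use the standard variational characterization of the Euclidean projection onto a nonempty closed convex set $K$: for a point $\.z\in K$, we have $\.z=\Pi_K(\.y)$ if and only if $\langle \.y-\.z,\.x-\.z\rangle\le 0$ for all $\.x\in K$. The set $\.X$ is nonempty, closed and convex, being a finite product of compact convex sets. Applying this with $K=\.X$, $\.y=\.x^*+v(\.x^*)$ and $\.z=\.x^*$ gives $\.y-\.z=v(\.x^*)$, and the projection inequality becomes literally condition (2). Since $\.x^*\in\.X$ is assumed, both directions follow at once.

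If a self-contained argument for the projection characterization is wanted, I would include a short one. Expanding $\norm{\.y-(\.z+t(\.x-\.z))}_2^2$ for $t\in[0,1]$ and using convexity of $K$ shows that minimality of $\.z$ forces the derivative at $t=0^+$, namely $-2\langle \.y-\.z,\.x-\.z\rangle$, to be nonnegative. Conversely, the inequality gives $\norm{\.y-\.x}_2^2=\norm{\.y-\.z}_2^2+\norm{\.z-\.x}_2^2-2\langle \.y-\.z,\.x-\.z\rangle\ge\norm{\.y-\.z}_2^2$ for all $\.x\in K$. Uniqueness of the projection follows from strict convexity of the squared norm. Otherwise, I would simply cite \cite{facchinei2003finite}.
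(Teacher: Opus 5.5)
Your proof is correct and follows the route the paper intends. The paper gives no separate proof, deferring to the basic theory of variational inequalities \cite{facchinei2003finite}, which is the projection characterization you use for (2)$\Leftrightarrow$(3), and your argument for (1)$\Leftrightarrow$(2) (summing the per-player inequalities, and conversely testing (2) with a profile that differs from $\mathbf{x}^*$ in a single block) is the same device the paper uses in its coordinatewise lemma for $[0,1]^d$.
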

Using item 3 in Proposition~\ref{VNEchar}, it is easy to see that every continuously differentiable game has at least one FONE. Define the mapping \(F \colon \.X \to \.X\) by \(F(\mathbf{x}) = \Pi_{\.X}\left(\mathbf{x} + v(\mathbf{x})\right)\), and observe that \(F\) is continuous. The existence of a FONE then follows directly from Brouwer's fixed-point theorem applied to the mapping \(F\). We refer to a point \(\mathbf{x}^*\) satisfying condition 2. in Proposition~\ref{VNEchar} as a solution to the variational inequality {\normalfont \rmfamily VI}\((v, \.X)\).

\section{Computing First-Order Nash Equilibria Using \stonr}
In this section, we make additional assumptions about strategy sets $X_i$ and mapping $v^{\G}$ defined by \eqref{def:v}. The first assumption will enable us to show that the variational inequality {\normalfont \rmfamily VI}$(v,\.X)$  can be decomposed into individual coordinates along the dimension $d$ of the game \eqref{def:dim}. This coordinatewise separation is in turn equivalent to the satisfaction of all coordinates, the property introduced by \citet[Definition 4]{daskalakis2023stay}. We recall the short notation $v_{ij}(\.x^*)$ for partial derivatives \eqref{def:pd}.
\begin{definition}\label{def:satisfaction}
Assume that $\G$ is a continuously differentiable game such that $X_i=[0,1]^{d_i}$ for each $i\in N$.  Let $\.x=(x_1,\dots,x_n) \in [0,1]^d$. For any $i \in N$ and $j \in [d_i]$, we call an ordered pair $ij$ a \emph{coordinate}, and we say that a~coordinate $ij$  is \emph{satisfied at} $\.x$ if one of these conditions holds:
    \begin{itemize}
        \item $v_{ij}(\.x) = 0.$ \hfill ($ij$ is \emph{zero-satisfied})
        \item $v_{ij}(\.x) < 0$ and $x_{ij} = 0.$ \hfill ($ij$ is \emph{boundary-satisfied})
        \item $v_{ij}(\.x) > 0$ and $x_{ij} = 1.$ \hfill ($ij$ is \emph{boundary-satisfied})
    \end{itemize}
\end{definition}
\begin{lemma}
    Let $\G$ be a continuously differentiable game and assume that $X_i=[0,1]^{d_i}$ for each $i\in N$. The following are equivalent for a point $\.x^* \in [0,1]^d$.
    \begin{enumerate}
        \item $\.x^*$ is a solution to {\normalfont \rmfamily VI$(v,[0,1]^d)$}.
        \item For each $i \in N$, every $j \in [d_i]$, and every $x_{ij} \in [0,1]$,
        \begin{equation}\label{VIij}
            v_{ij}(\.x^*)(x_{ij}-x_{ij}^*) \le 0.
        \end{equation}
        \item For each $i \in N$ and every $j \in [d_i]$, each coordinate $ij$ is satisfied at $\.x^*$.
    \end{enumerate}
\end{lemma}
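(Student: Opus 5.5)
The plan is to prove the cycle of implications $1\Rightarrow 2\Rightarrow 3\Rightarrow 1$. The only structure used is that $[0,1]^d$ is a product of intervals, so the inner product in {\normalfont \rmfamily VI$(v,[0,1]^d)$} splits as a sum over coordinates,
\[
\langle v(\.x^*),\.x-\.x^*\rangle=\sum_{i\in N}\sum_{j\in[d_i]} v_{ij}(\.x^*)(x_{ij}-x_{ij}^*).
\]

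\emph{From 1 to 2.} Fix a coordinate $ij$ and any $x_{ij}\in[0,1]$. Choose the test point $\.x\in[0,1]^d$ that agrees with $\.x^*$ in every coordinate except $ij$, where it equals $x_{ij}$. This point lies in $[0,1]^d$ because the set is a box. In the sum above all terms except the $ij$ term vanish, so item 2 of Proposition~\ref{VNEchar} applied to this $\.x$ yields exactly \eqref{VIij}.

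\emph{From 2 to 3.} Fix $ij$ and split into cases according to the sign of $v_{ij}(\.x^*)$.
\begin{itemize}
\item If $v_{ij}(\.x^*)=0$, the coordinate is zero-satisfied.
\item If $v_{ij}(\.x^*)>0$, take $x_{ij}=1$ in \eqref{VIij}. This gives $1-x_{ij}^*\le 0$, and since $x_{ij}^*\le 1$ we get $x_{ij}^*=1$, so the coordinate is boundary-satisfied.
\item If $v_{ij}(\.x^*)<0$, take $x_{ij}=0$. This gives $-x_{ij}^*\ge 0$, hence $x_{ij}^*=0$, again boundary-satisfied.
\end{itemize}

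\emph{From 3 to 1.} First check that each term $v_{ij}(\.x^*)(x_{ij}-x_{ij}^*)$ is nonpositive for every $x_{ij}\in[0,1]$.
\begin{itemize}
\item If the coordinate is zero-satisfied, the term is $0$.
\item If $v_{ij}(\.x^*)>0$ and $x_{ij}^*=1$, the term is a positive number times $x_{ij}-1\le 0$.
\item If $v_{ij}(\.x^*)<0$ and $x_{ij}^*=0$, the term is a negative number times $x_{ij}\ge 0$.
\end{itemize}
Summing over all coordinates via the decomposition gives $\langle v(\.x^*),\.x-\.x^*\rangle\le 0$ for all $\.x\in[0,1]^d$, which is item 1.

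No step is a genuine obstacle; the argument is elementary. The one point needing care is in $1\Rightarrow 2$: the test point obtained by perturbing a single coordinate must remain feasible, and this holds exactly because of the product (box) structure of $[0,1]^d$.
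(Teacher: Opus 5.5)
Your proof is correct and takes essentially the same approach as the paper. You use the same single-coordinate test point for $1\Rightarrow 2$ and the same coordinatewise summation to recover item 1, differing only in that you organize the argument as the cycle $1\Rightarrow 2\Rightarrow 3\Rightarrow 1$ and write out the sign case analysis that the paper dismisses as ``immediate from the definitions'' for $2\Leftrightarrow 3$.
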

\begin{proof}
    $1. \Rightarrow 2.$ Let $i \in N$, $j \in [d_i]$, and  $x_{ij} \in [0,1]$. Define $\.x \in \.X$ by 
    $$
    x_{kl} = \begin{cases}
        x_{ij} & \text{$k = i$ and $l = j$},\\
        x_{kl}^* & \text{otherwise,}
    \end{cases}
    $$
    for each $k\in [n]$ and $l \in [d_k]$.
     Then $$v_{ij}(\.x^*)(x_{ij} - x_{ij}^*) = \langle v(\.x^*),\.x - \.x^* \rangle \le 0.$$
     
     \noindent
    2. $\Rightarrow 1.$ This follows directly from the assumption by summing up the individual inequalities \eqref{VIij}, $$\langle v(\.x^*),\.x - \.x^*\rangle = \sum_{i=1}^n \sum_{j=1}^{d_i} v_{ij}(\.x^*)(x_{ij}-x_{ij}^*) \le 0.$$

\noindent
    2. $\Leftrightarrow 3.$ This follows immediately from the definitions.
\end{proof}
The assumption that each \(X_i\) is a unit hypercube \([0,1]^{d_i}\) is crucial for the design of the \stonr algorithm, which operates within the unit hypercube \([0,1]^d\). However, the algorithm can be extended to any convex compact joint strategy space \(\mathbf{X} = X_1 \times \dots \times X_n\), $X_i \subseteq \mathbb{R}^{d_i}$, if there exists a particular transformation \(H\) from \([0,1]^d\) to \(\mathbf{X}\) --- see \citet[Appendix B]{daskalakis2023stay} for details. Such a transformation \(H\) is straightforward to find if each \(X_i\) is a hyperrectangle,
$$
X_i = [a_{i1},b_{i1}] \times \dots \times [a_{id_i},b_{id_i}]
$$
for some \(a_{ij},b_{ij} \in \mathbb{R}\) with \(a_{ij} < b_{ij}\). Since many games have strategy spaces of this form, we will explain how to construct \(H\) in this case and describe the correspondence between the resulting variational inequalities. Let $$H = (H_{ij})_{i \in N, j \in [d_i]} \colon [0,1]^d \to \.X$$ be a bijective affine mapping with the component mappings $H_{ij} \colon [0,1]^d \to [a_{ij},b_{ij}]$ defined as
$
H_{ij}(\.x) = c_{ij}x_{ij}+a_{ij},
$
where $c_{ij} = b_{ij}-a_{ij}$.
Let $$w = (w_1,\dots,w_n) \colon [0,1]^d \to \mathbb{R}^n$$ where $w_i \colon [0,1]^{d} \to \mathbb{R}$ are defined as $w_i = u_i \circ H$.
Consider the map $\hat{w} = (\hat{w}_{ij})_{i \in N, j \in [d_i]} \colon [0,1]^d \to \mathbb{R}^d$ given by
$$
\hat{w}_{ij}(\.x) = \frac{\partial w_i(\.x)}{\partial x_{ij}}.
$$
For every $i \in N$, $j \in [d_i]$ and $\.x \in [0,1]^d$, the chain rule yields
\begin{equation}\label{eq:w}
    \hat{w}_{ij}(\.x) = c_{ij}v_{ij}(H(\.x)).
\end{equation}
\begin{proposition}\label{pro:transform}
     Let $\.x^* \in [0,1]^d$. The following are equivalent.
    \begin{enumerate}
        \item $\.x^*$ is a solution to {\normalfont \rmfamily VI}$(\hat{w},[0,1]^d)$.
        \item $H(\.x^*)$ is a solution to {\normalfont \rmfamily VI}$(v,\.X)$.
    \end{enumerate}
\end{proposition}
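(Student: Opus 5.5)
The plan is to reduce both variational inequalities to sums of coordinatewise terms and then compare them term by term using the chain-rule identity \eqref{eq:w}. Because $H$ is an affine bijection from $[0,1]^d$ onto $\.X$, every $\.y\in\.X$ can be written uniquely as $\.y=H(\.x)$ with $\.x\in[0,1]^d$. Componentwise this gives
\[
y_{ij}-H_{ij}(\.x^*)=c_{ij}(x_{ij}-x^*_{ij}).
\]

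First I will rewrite the inner product appearing in {\normalfont \rmfamily VI}$(v,\.X)$ at the point $H(\.x^*)$. For any $\.x\in[0,1]^d$, combining the display above with \eqref{eq:w} gives
\[
\langle v(H(\.x^*)),H(\.x)-H(\.x^*)\rangle=\sum_{i\in N}\sum_{j\in[d_i]} v_{ij}(H(\.x^*))\,c_{ij}(x_{ij}-x^*_{ij}) =\langle \hat w(\.x^*),\.x-\.x^*\rangle .
\]

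Next I use the bijectivity of $H$. As $\.x$ ranges over $[0,1]^d$, the point $H(\.x)$ ranges over all of $\.X$. Therefore the condition ``$\langle \hat w(\.x^*),\.x-\.x^*\rangle\le0$ for all $\.x\in[0,1]^d$'' is literally the same as ``$\langle v(H(\.x^*)),\.y-H(\.x^*)\rangle\le 0$ for all $\.y\in\.X$''. This gives both directions of the equivalence at once.

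The only step that needs care, and it is a mild one, is justifying \eqref{eq:w}. The function $w_i=u_i\circ H$ depends on $x_{ij}$ only through $H_{ij}$, and $\partial H_{ij}/\partial x_{ij}=c_{ij}$ while the other components of $H$ do not depend on $x_{ij}$, so the chain rule yields \eqref{eq:w}. The paper has already stated \eqref{eq:w}, so I will take it as given.

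Alternatively, one could use the preceding Lemma together with Definition~\ref{def:satisfaction}. Since $c_{ij}>0$, the sign of $\hat w_{ij}(\.x^*)$ equals the sign of $v_{ij}(H(\.x^*))$, and $x^*_{ij}\in\{0,1\}$ exactly when $H_{ij}(\.x^*)\in\{a_{ij},b_{ij}\}$. This shows that satisfaction of coordinates transfers between the two problems. The direct inner-product argument above is shorter, so I would present that one.
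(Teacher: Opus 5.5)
Your proposal is correct and is essentially the paper's own argument. The paper likewise uses \eqref{eq:w} to rewrite $\langle v(H(\mathbf{x}^*)),\mathbf{x}-H(\mathbf{x}^*)\rangle$ as $\langle \hat{w}(\mathbf{x}^*),\mathbf{y}-\mathbf{x}^*\rangle$ with $\mathbf{y}=H^{-1}(\mathbf{x})\in[0,1]^d$; the only difference is that you parametrize $\mathbf{X}$ by $H$ and use bijectivity to get both implications at once, whereas the paper proves $1\Rightarrow 2$ and calls the converse similar.
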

\begin{proof}
        $1.\Rightarrow 2.$ We must show $\langle v(H(\.x^*)),\.x - H(\.x^*)\rangle \leq 0$ for all $\.x \in \.X$. Using \eqref{eq:w} the left-hand side equals
        \begin{align*}
        &\langle v(H(\.x^*)),\.x - H(\.x^*)\rangle \\& = \sum_{i=1}^n \sum_{j=1}^{d_i} v_{ij}(H(\.x^*))(x_{ij}-H_{ij}(\.x^*)) \\
        &=\sum_{i=1}^n \sum_{j=1}^{d_i} \frac{\hat{w}_{ij}(\.x^*)}{c_{ij}}(x_{ij}-H_{ij}(\.x^*))\\
        &=\sum_{i=1}^n \sum_{j=1}^{d_i} \hat{w}_{ij}(\.x^*)\left(\frac{x_{ij}-a_{ij}}{c_{ij}}-x^*_{ij}\right).
        \end{align*}
        Let $\.y \in \mathbb{R}^d$ be a vector with coordinates $y_{ij} = \frac{x_{ij}-a_{ij}}{c_{ij}}$, where necessarily $0 \leq y_{ij} \leq 1$. Then the last term above equals $\langle \hat{w}(\.x^*),\.y-\.x^* \rangle$ and it is nonnegative by assumption~1. Implication $2.\Rightarrow 1.$ is proved similarly.
\end{proof}
Proposition~\ref{pro:transform} demonstrates a method for solving {\normalfont \rmfamily VI}$(v,\.X)$ by reducing the problem to solving {\normalfont \rmfamily VI}$(\hat{w},[0,1]^d)$. Specifically, any solution $\.x^* \in [0,1]^d$ to {\normalfont \rmfamily VI}$(\hat{w},[0,1]^d)$ can be mapped by $H$ to a corresponding solution $H(\.x^*) \in \.X$ for {\normalfont \rmfamily VI}$(v,\.X)$.

\subsection{Sketch of \stonr Algorithm for n-Player Games}
In the remainder of this paper, we will make two additional assumptions about the properties of the mapping \(v = v^{\G}\) associated with the game \(\G\). These assumptions are crucial for ensuring the convergence of the \stonr algorithm.
\begin{itemize}
    \item Mapping $v$ is $L$-Lipschitz,
    $$
    \norm{v(\.x)-v(\.y)}_2 \le L \cdot \norm{\.x-\.y}_2,
    $$
    for some $L>0$ and every $\.x,\.y \in \.X$.
    \item The Jacobian $\nabla v$ of $v$ is $\Lambda$-Lipschitz,
     $$
    \norm{\nabla v(\.x)-\nabla v(\.y)}_2 \le \Lambda \cdot \norm{\.x-\.y}_2,
    $$
    for some $\Lambda>0$ and every $\.x,\.y \in \.X$.  
\end{itemize}
The last property is called $\Lambda$-smoothness in \cite{daskalakis2023stay}.
Below we present a sketch of the \stonr algorithm for $n$-player nonzero-sum games. It is important to note that we are not introducing a  new algorithm but rather showing that the existing one \cite{daskalakis2023stay} can be applied to a broader class of games. The only required modification is in the algorithm's input $v$, which now corresponds to the $n$-player nonzero-sum game $\mathcal{G}$. To make our arguments clearer, we will first present the continuous-time version of the algorithm, followed by a brief overview of the discrete-time version.
\subsection{Continuous-time Version of \stonr}
The goal is to find a~point $\.x^* \in [0,1]^d$ such that each coordinate is satisfied at $\.x^*$ according to Definition~\ref{def:satisfaction}. The algorithm is initialized at $\.x(0) = (0,\dots,0)$ and aims to satisfy all coordinates one-by-one.
% We will now label the coordinates with an index $i \in [d]$.
For $i \in N$ and $j \in [d_i]$, define the set of coordinates $$\Smaller(ij) = \{kl \mid k \in N, l \in [d_k], kl < ij\},$$
where $<$ refers to the lexicographical ordering.
The algorithm proceeds in epochs, each denoted by a pair $(ij,S)$, where $S \subseteq \Smaller(ij)$ is the set of already zero-satisfied coordinates.
When the algorithm starts an epoch $(ij,S)$ at time $t$ and at some point $\.x \in [0,1]^d$, it is assumed that all coordinates in $S$ are zero-satisfied at~$\.x$, all coordinates in $\Smaller(ij) \setminus S$ are boundary-satisfied at $\.x$, and the epoch's goal is to find a point $\mathbf{x}' \in [0,1]^d$ such that all coordinates $\leq ij$ are satisfied at $\.x'$. Let us define the direction of movement in each epoch to achieve this goal.

For $i \in N$, $j \in [d_i]$, any set of coordinates $S = \{s_1,\dots,s_m\} \subseteq \Smaller(ij)$, and $\mathbf{x} \in [0,1]^d$,
define $\mathbf{u} \in \mathbb{R}^d$ such that
\begin{enumerate}
    \item \(u_{kl} = 0\), for all \(kl \notin S \cup \{ij\}\),
    \item \(\langle \nabla v_{kl}(\mathbf{x}), \mathbf{u} \rangle = 0\), for all \(kl \in S\), and
    \item the sign of determinant
            $$
            \left|
            \begin{matrix}
            \dfrac{\partial v_{s_1}(\mathbf{x})}{\partial x_{s_1}} & \dfrac{\partial v_{s_2}(\mathbf{x})}{\partial x_{s_1}} & \cdots & \dfrac{\partial v_{s_m}(\mathbf{x})}{\partial x_{s_1}} & u_{s_1} \\\\
            \vdots & \vdots & \vdots & \vdots & \vdots\\\\
            \dfrac{\partial v_{s_1}(\mathbf{x})}{\partial x_{s_m}} & \dfrac{\partial v_{s_2}(\mathbf{x})}{\partial x_{s_m}} & \cdots & \dfrac{\partial v_{s_m}(\mathbf{x})}{\partial x_{s_m}} & u_{s_m} \\\\
            \dfrac{\partial v_{s_1}(\mathbf{x})}{\partial x_{ij}} & \dfrac{\partial v_{s_2}(\mathbf{x})}{\partial x_{ij}} & \cdots & \dfrac{\partial v_{s_m}(\mathbf{x})}{\partial x_{ij}} & u_{ij}
            \end{matrix}
            \right|
            $$
    equals the sign of \((-1)^{|S|}\).
\end{enumerate}

The first two constraints reduce the problem to identify only non-zero coordinates of $\mathbf{u}$, determined by solving a~system of $m$ linear equations with $m+1$ variables. If the associated kernel is 1-dimensional, the solution space forms a line, and the third constraint specifies the direction of this line. If there exists a unique unit vector that satisfies all three constraints, we denote it by $D_{S}^{ij}(\mathbf{x})$.

The algorithm tries to achieve the epoch's goal by executing the continuous-time dynamics $\{\.z(\tau)\}_{\tau \geq 0}$ initialized at $\.z(0) = \.x(t)$ and moving in the direction $D^{ij}_S(\.z(\tau))$ within $[0,1]^d$. The continuous-time dynamics of epoch $(ij,S)$ ends at one of the \emph{exit points} $\.x' \in [0,1]^d$. They are categorized as follows:
\begin{itemize}
    \item \emph{Good exit point}: Coordinate $ij$ is satisfied at $\.x'$.
    \item \emph{Bad exit point}: For some coordinate $kl \in S \cup \{ij\}$, moving in the direction $D^{ij}_S(\.x')$ will eventually cause the $kl$-th coordinate to exceed the bounds of $[0,1]$.
    \item \emph{Middling exit point}: For some boundary-satisfied coordinate $kl \in \Smaller(ij) \setminus S$,
    continuing in the direction $D^{ij}_S(\.x')$ will cause $kl$ to become unsatisfied.
\end{itemize}
\begin{algorithm}[tb]
\caption{Continuous-time \stonr for $n$-player games}
\label{alg:algorithm}
\textbf{Input:} Mapping \( v \)\\
\textbf{Output:} Solution to {\normalfont \rmfamily VI}\((v, [0,1]^d)\)
\begin{algorithmic}[1] %[1] enables line numbers
    \STATE Initially $\.x(0) \leftarrow (0, \ldots, 0)$, $ij \leftarrow 11$, $S \leftarrow \emptyset$, $t \leftarrow 0$.
    \WHILE{$\.x(t)$ is not a solution to {\normalfont \rmfamily VI}$(v,[0,1]^d)$}
        \STATE Initialize the continuous-time dynamics of epoch $(ij,S)$, $\dot{\.z}(\tau) = D^{ij}_S(\.z(\tau))$, at $\.z(0) = \.x(t)$.
        % \STATE Initialize epoch $(ij,S)$'s continuous-time dynamics, $\dot{\.z}(\tau) = D^{ij}_S(\.z(\tau))$, at $\.z(0) = \.x(t)$.
        \WHILE{$\.z(\tau)$ is not an exit point}
            \STATE Execute $\dot{\.z}(\tau) = D^{ij}_S(\.z(\tau))$ forward in time.
        \ENDWHILE
        \STATE Set $\.x(t+\tau_{exit}) = \.z(\tau_{exit})$ (where $\tau_{exit}$ is the time $\.z(\tau)$ became an exit point).
        \IF{$\mathbf{x}(t+\tau_{exit})$ is a good exit point}
            \IF{$ij$ is zero-satisfied at $\.x(t+\tau_{exit})$}
                \STATE Add $ij$ to $S$.
                % \STATE Update $S \leftarrow S \cup \{ij\}$.
            \ENDIF
            \STATE Increase $ij$ by one in the lexicographical order.
            % \STATE Update $i \leftarrow i + 1$.
        \ELSIF{$\mathbf{x}(t+\tau_{exit})$ is a bad exit point for $kl = ij$}
            \STATE Decrease $ij$ by one in the lexicographical order to get $ij_{new}$, and remove $ij_{new}$ from $S$.
            % \STATE Update $i \leftarrow i - 1$ and $S \leftarrow S \setminus \{i-1\}$.
        \ELSIF{$\mathbf{x}(t+\tau_{exit})$ is a bad exit point for $kl \neq ij$}
        \STATE Remove $kl$ from $S$.
            % \STATE Update $S \leftarrow S \setminus \{j\}$.
        \ELSIF{$\mathbf{x}(t+\tau_{exit})$ is a middling exit point for $kl < ij$}
        \STATE Add $kl$ to $S$.
            % \STATE Update $S \leftarrow S \cup \{j\}$.
        \ENDIF
        \STATE Set $t \leftarrow t + \tau_{exit}$.
    \ENDWHILE
    \STATE \textbf{return} $\.x(t)$
\end{algorithmic}
\end{algorithm}
Algorithm~\ref{alg:algorithm} provides the pseudocode for the continuous-time version of \stonr. The convergence of Algorithm~\ref{alg:algorithm} to a FONE in two-player zero-sum games is proved under mild assumptions --- see \citet[Assumptions 1, 2, 3]{daskalakis2023stay}. These assumptions are as follows:
\begin{enumerate}
    \item For all $\.x \in [0,1]^d$, each $i \in N$, $j \in [d_i]$, and every $S \subseteq \Smaller(ij)$, if all coordinates $kl \in S$ are zero-satisfied at $\.x$ and for all coordinates $kl \notin S \cup \{ij\}$ it holds $\.x_{kl} \in \{0,1\}$, then the direction $D^{ij}_S(\.x)$ is uniquely defined.
    \item At any time, only one coordinate can trigger a middling or a bad event.
    \item It is possible to determine whether a coordinate begins or stops being satisfied by looking at the Jacobian of $v$.
\end{enumerate}

Assumptions 1 and 3 are verified during the direction computation, while Assumption 2 is checked at each visited point. In our code, we specifically verify Assumption 1, as it is crucial for ensuring a unique direction. The original paper \cite{daskalakis2023stay} emphasizes that all three assumptions are mild, a point which remains valid in the context of $n$-player nonzero-sum games. We emphasize that 
assumptions 1, 2, and 3 pertain solely to the analytical properties of the utility functions and do not depend on the zero-sum property.

We argue that Algorithm~\ref{alg:algorithm} converges to a FONE even when $v$ is associated with an $n$-player nonzero-sum game $\mathcal{G}$. This is based on the following observations.
\begin{enumerate}
    \item Min-max critical points in game $\G$ coincide with FONE whenever $\G$ is a two-player zero-sum continuously differentiable game.
    \item \stonr algorithm finds a solution to the non-monotone variational inequality associated with the Nash equilibrium problem for $\G$.
    \item The assumptions 1, 2, and 3 crucial for the \stonr convergence do not rely on the two-player or the zero-sum property of $\mathcal{G}$.
    \item The convergence analysis of the \stonr algorithm involves only the properties of the map $v$, and does not use the zero-sum property of the underlying game $\G$.
    \item The behavior of \stonr does not depend on the number of players since the concept of coordinatewise satisfaction uses only the vector of ``concatenated partial gradients'' $v(\.x)$.
\end{enumerate}
\subsection{Discrete-time Version of \stonr}
Algorithm~\ref{alg:algorithm} can be adapted to a discrete-time version \citet[Dynamics 4]{daskalakis2023stay}, which closely follows the structure of the continuous-time version. The main adjustment is in step 5, where the update rule is changed to $\.z^{(k+1)} \leftarrow \.z^{(k)} + \gamma \cdot D^{ij}_S(\.z^{(k)})$, with $\gamma > 0$ representing a step size. This version operates with $(\epsilon,\gamma)$-exit points, where $\epsilon > 0$ is the exit point error. 
It was shown \cite{daskalakis2023stay} that the discrete-time version of Algorithm~\ref{alg:algorithm} converges to an approximate solution to {\normalfont \rmfamily VI}$(v,[0,1]^d)$ in two-player zero-sum games. An \emph{$\alpha$-approximate solution} to {\normalfont \rmfamily VI}$(v,[0,1]^d)$, $\alpha > 0$, is a point $\.x^* \in [0,1]^d$ satisfying
\(
    \langle v(\.x^*), \.x - \.x^* \rangle \leq \alpha \) for all $\.x \in [0,1]^d$.

The original convergence theorem is presented below.
\begin{theorem}[{\citet[Theorem 39]{daskalakis2023stay}}]\label{thm:discrete}
Let assumptions 1, 2, and 3 hold. For every $\alpha > 0$ there exist constants $\epsilon$, $\gamma$, $\bar{M}$, $K$  such that the discrete-time version of Algorithm~\ref{alg:algorithm} with step size $\gamma$ and error $\epsilon$ finishes after $M \leq \bar{M}$ iterations of the while loop at line 2, and $\.x^{(M)}$ is an $\alpha$-approximate solution to {\normalfont \rmfamily VI}$(v,[0,1]^d)$. Moreover, for each iteration \( m \leq M \) of the while loop at line 2, the while loop at line 4 executes at most \( K \) iterations.
\end{theorem}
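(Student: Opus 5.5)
The plan is not to re-derive Theorem~\ref{thm:discrete} from scratch. Instead I will show that the proof of \citet[Theorem 39]{daskalakis2023stay} goes through verbatim once the two-player zero-sum gradient field $(-\nabla_x f,\nabla_y f)$ is replaced by the map $v=v^{\G}$ of an $n$-player game. The proof is a reduction: I would isolate every property of the input map used in their argument and check that each one holds for $v$.

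\textbf{Inventory of hypotheses.} First I would go through their argument and list exactly what it uses about the operator:
\begin{itemize}
\item $v$ is $L$-Lipschitz and its Jacobian $\nabla v$ is $\Lambda$-Lipschitz on $[0,1]^d$;
\item the domain is the hypercube $[0,1]^d$;
\item Assumptions 1, 2 and 3;
\item coordinatewise satisfaction (Definition~\ref{def:satisfaction}) characterizes solutions of VI$(v,[0,1]^d)$.
\end{itemize}
The last item is exactly our Lemma, whose proof uses only the product structure of $[0,1]^d$ and not the zero-sum structure. Via Proposition~\ref{pro:transform} the same holds for hyperrectangles, with $\hat w$ in place of $v$. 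The key point I expect to verify is that their analysis never uses symmetry of $\nabla v$ or the identity $v=(-\nabla_x f,\nabla_y f)$. That identity would force the Jacobian to have a particular block sign structure; for a general game, $\nabla v$ is an arbitrary Lipschitz matrix field.

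\textbf{Steps in their argument to recheck.} Next I would follow the structure of the discrete-time analysis in order.
\begin{enumerate}
\item Local step: the direction $D^{ij}_S$ is well defined and Lipschitz near the relevant manifolds. This uses Assumption 1, the implicit function theorem, and the $\Lambda$-smoothness of $v$. It yields a tracking bound between the discrete iterates $\.z^{(k)}$ and the continuous curve, of order $\gamma$ times a constant depending on $L$, $\Lambda$ and $d$.
\item Exit points: approximate $(\epsilon,\gamma)$-exit points behave like the exact exit points. This uses Assumptions 2 and 3.
\item Finite length: each epoch's trajectory has bounded length, which gives the constant $K$. This is where the sign condition on the determinant and compactness enter.
\item Termination: the epoch graph cannot cycle, which gives $\bar M$. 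This is a parity or degree argument on the sequence of pairs $(ij,S)$ ordered lexicographically; it is purely combinatorial and depends only on the satisfaction pattern.
\item Final point: at termination, every coordinate is $\epsilon$-satisfied. By the Lemma's decomposition and the bound $|x_{ij}-x_{ij}^*|\le 1$, this gives $\langle v(\.x^{(M)}),\.x-\.x^{(M)}\rangle \le d\,\epsilon'$ for some $\epsilon'$ depending on $\epsilon$, $\gamma$ and $L$. Choosing the parameters small enough makes this at most $\alpha$.
\end{enumerate}

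\textbf{Main obstacle.} The hard step is the bounded-length/no-cycling argument (steps 3 and 4). In the zero-sum case one might fear that it secretly relies on a potential-like structure, for example the min player descending and the max player ascending on the same $f$. I would first check whether their termination argument is topological, based on the orientation given by the determinant sign $(-1)^{|S|}$ and a Sperner- or path-following-type argument, rather than variational. If so, it transfers unchanged. If some lemma does invoke $\nabla v$ having the form of a signed Hessian, I would replace it by the general statement that the kernel of the $m\times(m+1)$ Jacobian submatrix is one-dimensional, which Assumption 1 supplies for arbitrary $v$. I would then redo the orientation consistency using only continuity of determinants.
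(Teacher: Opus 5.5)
Your proposal takes essentially the same route as the paper: both treat \citet[Theorem 39]{daskalakis2023stay} as established and argue that its hypotheses and proof depend only on analytic properties of the map $v$ and on Assumptions 1--3, never on the two-player or zero-sum structure, so the result carries over unchanged to $n$-player games. The paper's check is simply coarser than your step-by-step audit: it verifies three ingredients, namely exact convergence of the continuous-time dynamics (your steps 3--4), Lipschitzness of $v$ (your inventory), and Lipschitzness of $D^{ij}_S$ via Lemma~25 of \citet{daskalakis2023stay} (your step 1).
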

Theorem~\ref{thm:discrete} directly implies the convergence of the discrete-time version of Algorithm~\ref{alg:algorithm} to an approximate FONE in $n$-player nonzero-sum games. We present the proof below.
\begin{proof}
Our claim is an immediate consequence of Theorem~\ref{thm:discrete} because all the assumptions of the theorem are satisfied also for $n$-player nonzero-sum games. The assumptions are as follows:
\begin{enumerate}
    \item Algorithm~\ref{alg:algorithm} converges to an exact solution to {\normalfont \rmfamily VI}$(v,[0,1]^d)$. Our discussion of the continuous-time version fulfills this.
    \item The mapping $v$ is Lipschitz, which is true by the assumption.
        \item The mapping \( D^{ij}_S \) is Lipschitz, as directly established by Lemma~25 in \cite{daskalakis2023stay}. Notably, the lemma applies without requiring the game \( \mathcal{G} \) to be two-player or zero-sum.
\end{enumerate}
In conclusion, the discrete-time version of Algorithm~\ref{alg:algorithm} converges to an approximate FONE in our setting.
\end{proof}

\section*{Numerical Experiments}
We demonstrate the versatility of the \stonr extension by solving a variety of games from recent research papers. The transformation outlined in Proposition~\ref{pro:transform} was applied to hyperrectangle strategy sets. All experiments were executed on a laptop with Intel Core i5 CPU and 8 GB RAM. We implemented the discrete-time version of the \stonr algorithm in Julia, utilizing the \texttt{Symbolics.jl} package for differentiation and the \texttt{LinearAlgebra.jl} package for matrix operations. The source code will be included as an attachment to this paper.

\begin{example}\label{ex:example1}
(Two-player nonzero-sum nonconcave game from \cite{SteinOzdaglarParrilo08})
    Let $n = 2$. Each strategy set $X_i$ is equal to $[-1,1]$. The utility functions are
    \begin{align*}
        u_1(x_{11},x_{21}) &= 2x_{11}x_{21}+3x_{21}^3-2x_{11}^3-x_{11}-3x_{11}^2x_{21}^2,\\
        u_2(x_{11},x_{21}) &= 2x_{11}^2x_{21}^2-4x_{21}^3-x_{11}^2+4x_{21}+x_{11}^2x_{21}.
    \end{align*}
    \stonr finds a FONE $\.x^* = (-1,-1)$, which is a global maximum in $x_{11}$ and a local maximum in $x_{21}$. Therefore, the point $\.x^*$ is a local NE. The mapping $v$ has a value of $v(\.x^*) = (-3,-11)$. The solve time was $2 \, \text{ms}$.
\end{example}

\begin{example}\label{ex:example2}
(Three-player nonzero-sum nonconcave game from \cite{SteinOzdaglarParrilo08})
    Let $n = 3$. Each strategy set $X_i$ is equal to $[-1,1]$. The utility functions are
    \begin{align*}
    u_1(x_{11},x_{21},x_{31}) &= 1+2x_{11}+3x_{11}^2+2x_{21}x_{31}\\&+4x_{11}x_{21}x_{31}+6x_{11}^2x_{21}x_{31}+3x_{21}^2x_{31}^2\\&+6x_{11}x_{21}^2x_{31}^2+9x_{11}^2x_{21}^2x_{31}^2,\\
        u_2(x_{11},x_{21},x_{31}) &= 7+2x_{11}+3x_{11}^2+2x_{21}+4x_{11}x_{21}\\&+6x_{11}^2x_{21}+3x_{31}^2+6x_{11}x_{31}^2+9x_{11}^2x_{31}^2,\\
        u_3(x_{11},x_{21},x_{31}) &= -x_{31}-2x_{11}x_{31}-3x_{11}^2x_{31}\\&-2x_{21}x_{31}-4x_{11}x_{21}x_{31}-6x_{11}^2x_{21}x_{31}\\&-3x_{21}x_{31}^2-6x_{11}x_{21}x_{31}^2-9x_{11}^2x_{21}x_{31}^2.
    \end{align*}
    In this game, the \stonr method follows the trajectory depicted in Figure~\ref{fig:example2} and finds the FONE $\.x^* = (-1,1,-0.5)$. The FONE $\.x^*$ is a local maximum in $x_{11}$, and a global maximum in $x_{21}$ and $x_{31}$. Therefore, the point $\.x^*$ is a local NE.
    The mapping $v$ has a value of $v(\.x^*) = (-3,4,0)$. \stonr was executed with parameters $\gamma = 10^{-3}$ and $\epsilon = 10^{-2}$ and the solve time was $7 \,\text{ms}$.
    \begin{figure}
        \footnotesize
        \centering
        \input{ex2trajectoryt}
        \caption{The trajectory of \stonr in Example~\ref{ex:example2}  starting at $(-1,-1,-1)$ and converging to the FONE depicted by the cross.}
        \label{fig:example2}
    \end{figure}
\end{example}

\begin{example}\label{ex:example3}
(Six-player nonzero-sum concave game, power control problem in optical networks \cite{pan2007global})
The optical signal-to-noise ratio (OSNR) optimization problem is formulated as an $n$-player strategic game. Each player selects their strategy from a one-dimensional interval $X_i = [x_{\min},x_{\max}]$. The strategy $x_{i1}$ of the $i$-th player represents the $i$-th channel signal power.
The cost function for each player $i$ is given by:
$$
J_i(\.x) = \eta_ix_{i1}-\beta_i \left(\ln \left(1+a_i\frac{\gamma_i(\.x)}{1-\Phi_{ii}\gamma_i(\.x)}    \right)-x_{i1}\right),
$$
where $\eta_i$, $\beta_i$, and $a_i$ are channel-specific parameters. The parameter $\gamma_i(\.x)$ represents the OSNR, defined as
$$
\gamma_i(\.x) = \frac{x_{i1}}{n_0+\sum_j\Phi_{ij}x_{j1}},
$$
for a given system matrix $\Phi$ and input noise $n_0$. The numerical setting is adopted from~\cite{nguyen2023nash}. Specifically, $n=6$, $\eta_1=\dots=\eta_6=1$,
\begin{align*}
    \beta & = [0.5, 0.51, 0.52, 0.3, 0.31, 0.32],\\ a & = [0.261, 0.494, 0.107, 0.366, 0.208, 0.305],
\end{align*}
 $x_{\min} = 0.2 \, \text{mW}$, $x_{\max} = 2 \, \text{mW}$, and the matrix 
\[ \Phi\! =\tfrac{1}{10^5}\!\!\! \begin{bmatrix}

7.463 & 7.378 & 7.293 & 7.210 & 7.127 & 6.965 \\
7.451 & 7.365 & 7.281 & 7.198 & 7.115 & 6.953 \\
7.438 & 7.353 & 7.269 & 7.186 & 7.103 & 6.942 \\
7.427 & 7.342 & 7.258 & 7.175 & 7.093 & 6.931 \\
7.409 & 7.324 & 7.240 & 7.157 & 7.075 & 6.914 \\
7.387 & 7.303 & 7.219 & 7.136 & 7.055 & 6.894
\end{bmatrix}
.
\]
 \stonr finds a FONE of this game at $\.x^* = (0.333, 0.337, 0.340, 0.230, 0.236, 0.241)$, which also represents a NE.
 The mapping $v$ has a value of $v(\.x^*) = (0,0,0,0,0,0)$.
 \stonr was executed with parameters $\gamma = 10^{-3}$ and $\epsilon = 10^{-2}$ and the solve time was $10 \,\text{ms}$.
\end{example}

\begin{example}\label{ex:example4}
    (Two-player zero-sum game \cite{parrilo2006polynomial})
    Let $n = 2$. Each strategy set $X_i$ is equal to $[-1,1]$. The utility functions are
    \begin{align*}
        u_1(x_{11},x_{21}) &= 2x_{11}x_{21}^2-x_{11}^2-x_{21},\\
        u_2(x_{11},x_{21}) &= -u_1(x_{11},x_{21}).
    \end{align*}
    The \stonr method finds a FONE $\.x^* = (0.394,0.632)$, which is the only NE of the game. The mapping $v$ has a value of $v(\.x^*) = (0,0)$.
    \stonr was executed with parameters $\gamma = 10^{-3}$ and $\epsilon = 10^{-2}$ and the solve time was $13 \,\text{ms}$. The trajectory is illustrated in~Figure~\ref{fig:example4}.
    \begin{figure}
    \centering
    \include{ex4trajectoryt}
    \caption{The trajectory of \stonr in Example~\ref{ex:example4} starting at $(-1,-1)$ and converging to the FONE depicted by a cross.}
    \label{fig:example4}
    \end{figure}
\end{example}

\begin{example}\label{ex:example5}
(Three-player nonzero-sum game from \cite{kroupa2021separable})
    Let $n = 3$. Each strategy set $X_i$ is equal to $[0,1]$. The utility functions are
    \begin{align*}
        u_1(x_{11},x_{21},x_{31}) &= -2x_{11}x_{21}^2-2x_{11}^2+5x_{11}x_{21}\\&-4x_{11}x_{31}-x_{21}-2x_{31},\\
        u_2(x_{11},x_{21},x_{31}) &= 2x_{11}x_{21}^2-2x_{21}x_{31}^2-2x_{11}^2-5x_{11}x_{21}\\&-2x_{21}^2+5x_{21}x_{31}+x_{21},\\
        u_3(x_{11},x_{21},x_{31}) &= 2x_{21}x_{31}^2+4x_{11}^2+4x_{11}x_{31}+2x_{21}^2\\&-5x_{21}x_{31}+2x_{31}.
    \end{align*}
    The \stonr method finds a FONE of this game at $\.x^* = (0,1,1)$ and follows the trajectory depicted in Figure~\ref{fig:example5}. It is a global maximum in $x_{11}$ and $x_{21}$, and a local maximum in $x_{31}$. Therefore, the point $\.x^*$ is a local NE. The mapping $v$ has a value of $v(\.x^*) = (-1,0,1)$. \stonr was executed with parameters $\gamma = 10^{-3}$ and $\epsilon = 10^{-2}$ and the solve time was $21 \,\text{ms}$.
    \begin{figure}
        \footnotesize
        \centering
        \input{ex5trajectoryt}
        \caption{\stonr trajectory of Example~\ref{ex:example5}  starting at $(0,0,0)$ and converging to the FONE depicted by a cross.}
        \label{fig:example5}
    \end{figure}
\end{example}

\begin{example}[Nonzero-sum Adversarial Hypothesis Testing Game \cite{Yasodharan}]\label{ex:example6}
    This example is a two-player nonzero-sum game played between an attacker and a defender over $m$ data points. The defender picks actions $\boldsymbol{\phi}$ from the set $[0,1]^m$, while the attacker picks $q$ from the interval $[0, 1]$. The utility function of the defender parametrized~by a positive $\xi$ is
    \[       
        u_1(\boldsymbol{\phi},q) = \sum_{i=0}^{m}\phi_i  \binom{m}{i}  q^i(1-q)^{m-i} - \frac{\xi}{2^m}\sum_{i=0}^{m}\phi_i  \binom{m}{i} - 1
    \]
    and the utility function of the attacker is
        \begin{align*}        
        u_2(\boldsymbol{\phi},q) &= - c(q) + \sum_{i=0}^{m}\binom{m}{i}(1-\phi_i)q^i(1-q)^{m-i},
    \end{align*}
    where $c(q) = (q-0.8)^2$. The \stonr method converges to NE when $m \le 2$ and to FONE otherwise. For example, instantiated with $m=3$ and $\xi=0.2$, \stonr converges to the LNE $\approx ([0.28, 1, 1, 1], 0.7)$. While the profile is optimal for the defender, the attacker could achieve better utility by playing $0$. A local NE is similarly recovered for $\xi=0.8$; whereas for $\xi=1.2$ and $3.2$, \stonr converges to NE.
    
    Trajectories for $m=1$ and for different values of $\xi$ are shown in Figure~\ref{fig:example6}. The \stonr execution parameters were $\gamma = 10^{-4}$ and $\epsilon = 10^{-4}$. The average solve time for $m=1$ was $93 \,\text{ms}$. 
    \begin{figure}
        \centering
        \input{hypotrajectoryt}
        \caption{\stonr trajectories for the game in Example~\ref{ex:example6} with $m=1$ and $\xi \in (0.2, 0.8, 3.2)$. The trajectories start at $(0,0,0)$ and converge to FONEs depicted by crosses.}
        \label{fig:example6}
    \end{figure}
\end{example}

\begin{example}[Two-player zero-sum nonconcave game from \cite{karlin2003mathematical}]
    The strategy sets for both players are unit intervals $X_1 = X_2 = [0,1]$. The utility function for the first player is defined by
\begin{align*}
    u_1(x_{11}, x_{21}) = 
    &\left( x_{21} - \frac{1}{2} \right) 
    \left[
    \frac{1 + \left( x_{11} - \frac{1}{2} \right) \left( x_{21} - \frac{1}{2} \right)^2}
         {1 + \left( x_{11} - \frac{1}{2} \right)^2 \left( x_{21} - \frac{1}{2} \right)^4} \right. \\
    &\left. - \frac{1}{1 + \left( \frac{x_{11}}{3} - \frac{1}{2} \right)^2 \left( x_{21} - \frac{1}{2} \right)^4} 
    \right].
\end{align*}

    \stonr finds a FONE $\.x^* = (0,0.497)$, which is not a local NE. The mapping $v$ has a value of $v(\.x^*) = (0,0)$. \stonr was executed with parameters $\gamma = 10^{-5}$ and $\epsilon=10^{-5}$, and the solve time was $84 \,\text{ms}$.
    We note that the unique equilibrium strategy for each player is the Cantor distribution. However, the FONE lies in rational numbers and is not hard to find.
\end{example}

In our numerical examples, we note that in many cases, the FONE found aligns with a LNE, preserving the local optimality of equilibria. Verifying that a FONE is indeed a LNE is complex, requiring the second-order sufficiency conditions for local optima \cite[p. 252]{bertsekas2016nonlinear}. In general, analyzing when FONE coincides with LNE as solution concepts across families of games is nontrivial; see Proposition 7 by \citet{pang2011nonconvex} for further details.

\section*{Conclusions}

While LNE can be insightful in modeling real-world games by capturing local deviations, FONE offers a more tractable solution concept for games in which strategy sets are compact and convex. Given that even LNE may fail to exist in nonconcave games, focusing on first-order solutions is a natural relaxation. FONE may not always qualify as LNE, as equation (7) does not generally imply LNE. However, in numerous examples, we demonstrated that the FONE identified is indeed an LNE. A thorough analysis of this phenomenon would require verifying second-order sufficiency conditions, which can be complex and is beyond the scope of this paper. It is important to note that the algorithm is not designed to find second-order solutions. However, it has guaranteed convergence to a FONE in games without any assumptions on the utility functions, except for their smoothness. 

One potential use for \stonr could be as a preliminary step for methods that identify stronger solution concepts, such as local or approximate NE. To check if FONE is a NE, it is only necessary to check if each player is playing their best response. 

Since hyperrectangle strategy sets may be restrictive in some applications, another promising future research direction is to extend \stonr to more general convex strategy sets; see \citet[Appendix B]{daskalakis2023stay}.

\bibliography{bibliography}
\bibliographystyle{plainnat}

\end{document}

%% file: ex2trajectoryt.tex
% GNUPLOT: LaTeX picture with Postscript
\begingroup
  \makeatletter
  \providecommand\color[2][]{%
    \GenericError{(gnuplot) \space\space\space\@spaces}{%
      Package color not loaded in conjunction with
      terminal option `colourtext'%
    }{See the gnuplot documentation for explanation.%
    }{Either use 'blacktext' in gnuplot or load the package
      color.sty in LaTeX.}%
    \renewcommand\color[2][]{}%
  }%
  \providecommand\includegraphics[2][]{%
    \GenericError{(gnuplot) \space\space\space\@spaces}{%
      Package graphicx or graphics not loaded%
    }{See the gnuplot documentation for explanation.%
    }{The gnuplot epslatex terminal needs graphicx.sty or graphics.sty.}%
    \renewcommand\includegraphics[2][]{}%
  }%
  \providecommand\rotatebox[2]{#2}%
  \@ifundefined{ifGPcolor}{%
    \newif\ifGPcolor
    \GPcolortrue
  }{}%
  \@ifundefined{ifGPblacktext}{%
    \newif\ifGPblacktext
    \GPblacktexttrue
  }{}%
  % define a \g@addto@macro without @ in the name:
  \let\gplgaddtomacro\g@addto@macro
  % define empty templates for all commands taking text:
  \gdef\gplbacktext{}%
  \gdef\gplfronttext{}%
  \makeatother
  \ifGPblacktext
    % no textcolor at all
    \def\colorrgb#1{}%
    \def\colorgray#1{}%
  \else
    % gray or color?
    \ifGPcolor
      \def\colorrgb#1{\color[rgb]{#1}}%
      \def\colorgray#1{\color[gray]{#1}}%
      \expandafter\def\csname LTw\endcsname{\color{white}}%
      \expandafter\def\csname LTb\endcsname{\color{black}}%
      \expandafter\def\csname LTa\endcsname{\color{black}}%
      \expandafter\def\csname LT0\endcsname{\color[rgb]{1,0,0}}%
      \expandafter\def\csname LT1\endcsname{\color[rgb]{0,1,0}}%
      \expandafter\def\csname LT2\endcsname{\color[rgb]{0,0,1}}%
      \expandafter\def\csname LT3\endcsname{\color[rgb]{1,0,1}}%
      \expandafter\def\csname LT4\endcsname{\color[rgb]{0,1,1}}%
      \expandafter\def\csname LT5\endcsname{\color[rgb]{1,1,0}}%
      \expandafter\def\csname LT6\endcsname{\color[rgb]{0,0,0}}%
      \expandafter\def\csname LT7\endcsname{\color[rgb]{1,0.3,0}}%
      \expandafter\def\csname LT8\endcsname{\color[rgb]{0.5,0.5,0.5}}%
    \else
      % gray
      \def\colorrgb#1{\color{black}}%
      \def\colorgray#1{\color[gray]{#1}}%
      \expandafter\def\csname LTw\endcsname{\color{white}}%
      \expandafter\def\csname LTb\endcsname{\color{black}}%
      \expandafter\def\csname LTa\endcsname{\color{black}}%
      \expandafter\def\csname LT0\endcsname{\color{black}}%
      \expandafter\def\csname LT1\endcsname{\color{black}}%
      \expandafter\def\csname LT2\endcsname{\color{black}}%
      \expandafter\def\csname LT3\endcsname{\color{black}}%
      \expandafter\def\csname LT4\endcsname{\color{black}}%
      \expandafter\def\csname LT5\endcsname{\color{black}}%
      \expandafter\def\csname LT6\endcsname{\color{black}}%
      \expandafter\def\csname LT7\endcsname{\color{black}}%
      \expandafter\def\csname LT8\endcsname{\color{black}}%
    \fi
  \fi
    \setlength{\unitlength}{0.0500bp}%
    \ifx\gptboxheight\undefined%
      \newlength{\gptboxheight}%
      \newlength{\gptboxwidth}%
      \newsavebox{\gptboxtext}%
    \fi%
    \setlength{\fboxrule}{0.5pt}%
    \setlength{\fboxsep}{1pt}%
    \definecolor{tbcol}{rgb}{1,1,1}%
\begin{picture}(4240.00,3400.00)%
    \gplgaddtomacro\gplbacktext{%
      \csname LTb\endcsname%%
      \put(1114,980){\makebox(0,0){\strut{}$-1$}}%
      \csname LTb\endcsname%%
      \put(1713,807){\makebox(0,0){\strut{}$0$}}%
      \csname LTb\endcsname%%
      \put(2312,634){\makebox(0,0){\strut{}$1$}}%
      \csname LTb\endcsname%%
      \put(2450,678){\makebox(0,0){\strut{}$-1$}}%
      \csname LTb\endcsname%%
      \put(2796,978){\makebox(0,0){\strut{}$0$}}%
      \csname LTb\endcsname%%
      \put(3142,1277){\makebox(0,0){\strut{}$1$}}%
      \csname LTb\endcsname%%
      \put(973,1240){\makebox(0,0)[r]{\strut{}$-1$}}%
      \csname LTb\endcsname%%
      \put(973,1761){\makebox(0,0)[r]{\strut{}$0$}}%
      \csname LTb\endcsname%%
      \put(973,2282){\makebox(0,0)[r]{\strut{}$1$}}%
    }%
    \gplgaddtomacro\gplfronttext{%
      \csname LTb\endcsname%%
      \put(3531,2804){\makebox(0,0)[r]{\strut{}Trajectory}}%
      \csname LTb\endcsname%%
      \put(3531,2565){\makebox(0,0)[r]{\strut{}\textsc{fone}}}%
      \csname LTb\endcsname%%
      \put(1285,496){\makebox(0,0){\strut{}$x_{11}$}}%
      \csname LTb\endcsname%%
      \put(3538,798){\makebox(0,0){\strut{}$x_{21}$}}%
      \csname LTb\endcsname%%
      \put(460,1761){\makebox(0,0){\strut{}$x_{31}$}}%
    }%
    \gplbacktext
    \put(0,0){\includegraphics[width={212.00bp},height={170.00bp}]{ex2trajectory}}%
    \gplfronttext
  \end{picture}%
\endgroup

%% file: ex4trajectoryt.tex
% GNUPLOT: LaTeX picture with Postscript
\begingroup
  \makeatletter
  \providecommand\color[2][]{%
    \GenericError{(gnuplot) \space\space\space\@spaces}{%
      Package color not loaded in conjunction with
      terminal option `colourtext'%
    }{See the gnuplot documentation for explanation.%
    }{Either use 'blacktext' in gnuplot or load the package
      color.sty in LaTeX.}%
    \renewcommand\color[2][]{}%
  }%
  \providecommand\includegraphics[2][]{%
    \GenericError{(gnuplot) \space\space\space\@spaces}{%
      Package graphicx or graphics not loaded%
    }{See the gnuplot documentation for explanation.%
    }{The gnuplot epslatex terminal needs graphicx.sty or graphics.sty.}%
    \renewcommand\includegraphics[2][]{}%
  }%
  \providecommand\rotatebox[2]{#2}%
  \@ifundefined{ifGPcolor}{%
    \newif\ifGPcolor
    \GPcolortrue
  }{}%
  \@ifundefined{ifGPblacktext}{%
    \newif\ifGPblacktext
    \GPblacktexttrue
  }{}%
  % define a \g@addto@macro without @ in the name:
  \let\gplgaddtomacro\g@addto@macro
  % define empty templates for all commands taking text:
  \gdef\gplbacktext{}%
  \gdef\gplfronttext{}%
  \makeatother
  \ifGPblacktext
    % no textcolor at all
    \def\colorrgb#1{}%
    \def\colorgray#1{}%
  \else
    % gray or color?
    \ifGPcolor
      \def\colorrgb#1{\color[rgb]{#1}}%
      \def\colorgray#1{\color[gray]{#1}}%
      \expandafter\def\csname LTw\endcsname{\color{white}}%
      \expandafter\def\csname LTb\endcsname{\color{black}}%
      \expandafter\def\csname LTa\endcsname{\color{black}}%
      \expandafter\def\csname LT0\endcsname{\color[rgb]{1,0,0}}%
      \expandafter\def\csname LT1\endcsname{\color[rgb]{0,1,0}}%
      \expandafter\def\csname LT2\endcsname{\color[rgb]{0,0,1}}%
      \expandafter\def\csname LT3\endcsname{\color[rgb]{1,0,1}}%
      \expandafter\def\csname LT4\endcsname{\color[rgb]{0,1,1}}%
      \expandafter\def\csname LT5\endcsname{\color[rgb]{1,1,0}}%
      \expandafter\def\csname LT6\endcsname{\color[rgb]{0,0,0}}%
      \expandafter\def\csname LT7\endcsname{\color[rgb]{1,0.3,0}}%
      \expandafter\def\csname LT8\endcsname{\color[rgb]{0.5,0.5,0.5}}%
    \else
      % gray
      \def\colorrgb#1{\color{black}}%
      \def\colorgray#1{\color[gray]{#1}}%
      \expandafter\def\csname LTw\endcsname{\color{white}}%
      \expandafter\def\csname LTb\endcsname{\color{black}}%
      \expandafter\def\csname LTa\endcsname{\color{black}}%
      \expandafter\def\csname LT0\endcsname{\color{black}}%
      \expandafter\def\csname LT1\endcsname{\color{black}}%
      \expandafter\def\csname LT2\endcsname{\color{black}}%
      \expandafter\def\csname LT3\endcsname{\color{black}}%
      \expandafter\def\csname LT4\endcsname{\color{black}}%
      \expandafter\def\csname LT5\endcsname{\color{black}}%
      \expandafter\def\csname LT6\endcsname{\color{black}}%
      \expandafter\def\csname LT7\endcsname{\color{black}}%
      \expandafter\def\csname LT8\endcsname{\color{black}}%
    \fi
  \fi
    \setlength{\unitlength}{0.0500bp}%
    \ifx\gptboxheight\undefined%
      \newlength{\gptboxheight}%
      \newlength{\gptboxwidth}%
      \newsavebox{\gptboxtext}%
    \fi%
    \setlength{\fboxrule}{0.5pt}%
    \setlength{\fboxsep}{1pt}%
    \definecolor{tbcol}{rgb}{1,1,1}%
\begin{picture}(3960.00,3400.00)%
    \gplgaddtomacro\gplbacktext{%
      \csname LTb\endcsname%%
      \put(793,964){\makebox(0,0)[r]{\strut{}$-1$}}%
      \csname LTb\endcsname%%
      \put(793,1459){\makebox(0,0)[r]{\strut{}$-0.5$}}%
      \csname LTb\endcsname%%
      \put(793,1953){\makebox(0,0)[r]{\strut{}$0$}}%
      \csname LTb\endcsname%%
      \put(793,2448){\makebox(0,0)[r]{\strut{}$0.5$}}%
      \csname LTb\endcsname%%
      \put(793,2942){\makebox(0,0)[r]{\strut{}$1$}}%
      \csname LTb\endcsname%%
      \put(1122,527){\makebox(0,0){\strut{}$-1$}}%
      \csname LTb\endcsname%%
      \put(1694,527){\makebox(0,0){\strut{}$-0.5$}}%
      \csname LTb\endcsname%%
      \put(2266,527){\makebox(0,0){\strut{}$0$}}%
      \csname LTb\endcsname%%
      \put(2837,527){\makebox(0,0){\strut{}$0.5$}}%
      \csname LTb\endcsname%%
      \put(3409,527){\makebox(0,0){\strut{}$1$}}%
    }%
    \gplgaddtomacro\gplfronttext{%
      \csname LTb\endcsname%%
      \put(2001,2924){\makebox(0,0)[r]{\strut{}Trajectory}}%
      \csname LTb\endcsname%%
      \put(2001,2684){\makebox(0,0)[r]{\strut{}\textsc{fone}}}%
      \csname LTb\endcsname%%
      \put(195,1953){\rotatebox{-270.00}{\makebox(0,0){\strut{}$x_{21}$}}}%
      \csname LTb\endcsname%%
      \put(2266,167){\makebox(0,0){\strut{}$x_{11}$}}%
    }%
    \gplbacktext
    \put(0,0){\includegraphics[width={198.00bp},height={170.00bp}]{ex4trajectory}}%
    \gplfronttext
  \end{picture}%
\endgroup

%% file: ex5trajectoryt.tex
% GNUPLOT: LaTeX picture with Postscript
\begingroup
  \makeatletter
  \providecommand\color[2][]{%
    \GenericError{(gnuplot) \space\space\space\@spaces}{%
      Package color not loaded in conjunction with
      terminal option `colourtext'%
    }{See the gnuplot documentation for explanation.%
    }{Either use 'blacktext' in gnuplot or load the package
      color.sty in LaTeX.}%
    \renewcommand\color[2][]{}%
  }%
  \providecommand\includegraphics[2][]{%
    \GenericError{(gnuplot) \space\space\space\@spaces}{%
      Package graphicx or graphics not loaded%
    }{See the gnuplot documentation for explanation.%
    }{The gnuplot epslatex terminal needs graphicx.sty or graphics.sty.}%
    \renewcommand\includegraphics[2][]{}%
  }%
  \providecommand\rotatebox[2]{#2}%
  \@ifundefined{ifGPcolor}{%
    \newif\ifGPcolor
    \GPcolortrue
  }{}%
  \@ifundefined{ifGPblacktext}{%
    \newif\ifGPblacktext
    \GPblacktexttrue
  }{}%
  % define a \g@addto@macro without @ in the name:
  \let\gplgaddtomacro\g@addto@macro
  % define empty templates for all commands taking text:
  \gdef\gplbacktext{}%
  \gdef\gplfronttext{}%
  \makeatother
  \ifGPblacktext
    % no textcolor at all
    \def\colorrgb#1{}%
    \def\colorgray#1{}%
  \else
    % gray or color?
    \ifGPcolor
      \def\colorrgb#1{\color[rgb]{#1}}%
      \def\colorgray#1{\color[gray]{#1}}%
      \expandafter\def\csname LTw\endcsname{\color{white}}%
      \expandafter\def\csname LTb\endcsname{\color{black}}%
      \expandafter\def\csname LTa\endcsname{\color{black}}%
      \expandafter\def\csname LT0\endcsname{\color[rgb]{1,0,0}}%
      \expandafter\def\csname LT1\endcsname{\color[rgb]{0,1,0}}%
      \expandafter\def\csname LT2\endcsname{\color[rgb]{0,0,1}}%
      \expandafter\def\csname LT3\endcsname{\color[rgb]{1,0,1}}%
      \expandafter\def\csname LT4\endcsname{\color[rgb]{0,1,1}}%
      \expandafter\def\csname LT5\endcsname{\color[rgb]{1,1,0}}%
      \expandafter\def\csname LT6\endcsname{\color[rgb]{0,0,0}}%
      \expandafter\def\csname LT7\endcsname{\color[rgb]{1,0.3,0}}%
      \expandafter\def\csname LT8\endcsname{\color[rgb]{0.5,0.5,0.5}}%
    \else
      % gray
      \def\colorrgb#1{\color{black}}%
      \def\colorgray#1{\color[gray]{#1}}%
      \expandafter\def\csname LTw\endcsname{\color{white}}%
      \expandafter\def\csname LTb\endcsname{\color{black}}%
      \expandafter\def\csname LTa\endcsname{\color{black}}%
      \expandafter\def\csname LT0\endcsname{\color{black}}%
      \expandafter\def\csname LT1\endcsname{\color{black}}%
      \expandafter\def\csname LT2\endcsname{\color{black}}%
      \expandafter\def\csname LT3\endcsname{\color{black}}%
      \expandafter\def\csname LT4\endcsname{\color{black}}%
      \expandafter\def\csname LT5\endcsname{\color{black}}%
      \expandafter\def\csname LT6\endcsname{\color{black}}%
      \expandafter\def\csname LT7\endcsname{\color{black}}%
      \expandafter\def\csname LT8\endcsname{\color{black}}%
    \fi
  \fi
    \setlength{\unitlength}{0.0500bp}%
    \ifx\gptboxheight\undefined%
      \newlength{\gptboxheight}%
      \newlength{\gptboxwidth}%
      \newsavebox{\gptboxtext}%
    \fi%
    \setlength{\fboxrule}{0.5pt}%
    \setlength{\fboxsep}{1pt}%
    \definecolor{tbcol}{rgb}{1,1,1}%
\begin{picture}(4240.00,3400.00)%
    \gplgaddtomacro\gplbacktext{%
      \csname LTb\endcsname%%
      \put(1114,980){\makebox(0,0){\strut{}$0$}}%
      \csname LTb\endcsname%%
      \put(2312,634){\makebox(0,0){\strut{}$1$}}%
      \csname LTb\endcsname%%
      \put(2450,678){\makebox(0,0){\strut{}$0$}}%
      \csname LTb\endcsname%%
      \put(3142,1277){\makebox(0,0){\strut{}$1$}}%
      \csname LTb\endcsname%%
      \put(973,1193){\makebox(0,0)[r]{\strut{}$0$}}%
      \csname LTb\endcsname%%
      \put(973,2282){\makebox(0,0)[r]{\strut{}$1$}}%
    }%
    \gplgaddtomacro\gplfronttext{%
      \csname LTb\endcsname%%
      \put(3531,2804){\makebox(0,0)[r]{\strut{}Trajectory}}%
      \csname LTb\endcsname%%
      \put(3531,2565){\makebox(0,0)[r]{\strut{}\textsc{fone}}}%
      \csname LTb\endcsname%%
      \put(1285,496){\makebox(0,0){\strut{}$x_{11}$}}%
      \csname LTb\endcsname%%
      \put(3538,798){\makebox(0,0){\strut{}$x_{21}$}}%
      \csname LTb\endcsname%%
      \put(460,1737){\makebox(0,0){\strut{}$x_{31}$}}%
    }%
    \gplbacktext
    \put(0,0){\includegraphics[width={212.00bp},height={170.00bp}]{ex5trajectory}}%
    \gplfronttext
  \end{picture}%
\endgroup

%% file: hypotrajectoryt.tex
% GNUPLOT: LaTeX picture with Postscript
\begingroup
  \makeatletter
  \providecommand\color[2][]{%
    \GenericError{(gnuplot) \space\space\space\@spaces}{%
      Package color not loaded in conjunction with
      terminal option `colourtext'%
    }{See the gnuplot documentation for explanation.%
    }{Either use 'blacktext' in gnuplot or load the package
      color.sty in LaTeX.}%
    \renewcommand\color[2][]{}%
  }%
  \providecommand\includegraphics[2][]{%
    \GenericError{(gnuplot) \space\space\space\@spaces}{%
      Package graphicx or graphics not loaded%
    }{See the gnuplot documentation for explanation.%
    }{The gnuplot epslatex terminal needs graphicx.sty or graphics.sty.}%
    \renewcommand\includegraphics[2][]{}%
  }%
  \providecommand\rotatebox[2]{#2}%
  \@ifundefined{ifGPcolor}{%
    \newif\ifGPcolor
    \GPcolortrue
  }{}%
  \@ifundefined{ifGPblacktext}{%
    \newif\ifGPblacktext
    \GPblacktexttrue
  }{}%
  % define a \g@addto@macro without @ in the name:
  \let\gplgaddtomacro\g@addto@macro
  % define empty templates for all commands taking text:
  \gdef\gplbacktext{}%
  \gdef\gplfronttext{}%
  \makeatother
  \ifGPblacktext
    % no textcolor at all
    \def\colorrgb#1{}%
    \def\colorgray#1{}%
  \else
    % gray or color?
    \ifGPcolor
      \def\colorrgb#1{\color[rgb]{#1}}%
      \def\colorgray#1{\color[gray]{#1}}%
      \expandafter\def\csname LTw\endcsname{\color{white}}%
      \expandafter\def\csname LTb\endcsname{\color{black}}%
      \expandafter\def\csname LTa\endcsname{\color{black}}%
      \expandafter\def\csname LT0\endcsname{\color[rgb]{1,0,0}}%
      \expandafter\def\csname LT1\endcsname{\color[rgb]{0,1,0}}%
      \expandafter\def\csname LT2\endcsname{\color[rgb]{0,0,1}}%
      \expandafter\def\csname LT3\endcsname{\color[rgb]{1,0,1}}%
      \expandafter\def\csname LT4\endcsname{\color[rgb]{0,1,1}}%
      \expandafter\def\csname LT5\endcsname{\color[rgb]{1,1,0}}%
      \expandafter\def\csname LT6\endcsname{\color[rgb]{0,0,0}}%
      \expandafter\def\csname LT7\endcsname{\color[rgb]{1,0.3,0}}%
      \expandafter\def\csname LT8\endcsname{\color[rgb]{0.5,0.5,0.5}}%
    \else
      % gray
      \def\colorrgb#1{\color{black}}%
      \def\colorgray#1{\color[gray]{#1}}%
      \expandafter\def\csname LTw\endcsname{\color{white}}%
      \expandafter\def\csname LTb\endcsname{\color{black}}%
      \expandafter\def\csname LTa\endcsname{\color{black}}%
      \expandafter\def\csname LT0\endcsname{\color{black}}%
      \expandafter\def\csname LT1\endcsname{\color{black}}%
      \expandafter\def\csname LT2\endcsname{\color{black}}%
      \expandafter\def\csname LT3\endcsname{\color{black}}%
      \expandafter\def\csname LT4\endcsname{\color{black}}%
      \expandafter\def\csname LT5\endcsname{\color{black}}%
      \expandafter\def\csname LT6\endcsname{\color{black}}%
      \expandafter\def\csname LT7\endcsname{\color{black}}%
      \expandafter\def\csname LT8\endcsname{\color{black}}%
    \fi
  \fi
    \setlength{\unitlength}{0.0500bp}%
    \ifx\gptboxheight\undefined%
      \newlength{\gptboxheight}%
      \newlength{\gptboxwidth}%
      \newsavebox{\gptboxtext}%
    \fi%
    \setlength{\fboxrule}{0.5pt}%
    \setlength{\fboxsep}{1pt}%
    \definecolor{tbcol}{rgb}{1,1,1}%
\begin{picture}(4520.00,3400.00)%
    \gplgaddtomacro\gplbacktext{%
      \csname LTb\endcsname%%
      \put(1354,951){\makebox(0,0){\strut{}$0$}}%
      \csname LTb\endcsname%%
      \put(2352,663){\makebox(0,0){\strut{}$1$}}%
      \csname LTb\endcsname%%
      \put(2648,728){\makebox(0,0){\strut{}$0$}}%
      \csname LTb\endcsname%%
      \put(3224,1227){\makebox(0,0){\strut{}$1$}}%
      \csname LTb\endcsname%%
      \put(1113,1283){\makebox(0,0)[r]{\strut{}$0$}}%
      \csname LTb\endcsname%%
      \put(1113,2191){\makebox(0,0)[r]{\strut{}$1$}}%
    }%
    \gplgaddtomacro\gplfronttext{%
      \csname LTb\endcsname%%
      \put(3873,2804){\makebox(0,0)[r]{\strut{}$\xi=0.2$}}%
      \csname LTb\endcsname%%
      \put(3873,2565){\makebox(0,0)[r]{\strut{}$\xi=0.8$}}%
      \csname LTb\endcsname%%
      \put(3873,2325){\makebox(0,0)[r]{\strut{}$\xi=3.2$}}%
      \csname LTb\endcsname%%
      \put(1425,496){\makebox(0,0){\strut{}$\phi_0$}}%
      \csname LTb\endcsname%%
      \put(3678,798){\makebox(0,0){\strut{}$\phi_1$}}%
      \csname LTb\endcsname%%
      \put(600,1737){\makebox(0,0){\strut{}$q$}}%
    }%
    \gplbacktext
    \put(0,0){\includegraphics[width={226.00bp},height={170.00bp}]{hypotrajectory}}%
    \gplfronttext
  \end{picture}%
\endgroup